\documentclass[1p]{elsarticle}
\usepackage{amsmath}
\usepackage{amssymb}
\usepackage{amsthm}
\usepackage{color}
\usepackage{graphicx}
\usepackage{xspace}
\usepackage{tikz}
\usepackage[noend]{algpseudocode}
\usepackage{algorithm}

\newcommand{\firefighter}{{\sc{Firefighter}}\xspace}

\newcommand{\onesat}{{\sc{Cubic Monotone 1-In-3-Sat}}\xspace}
\newcommand{\ecover}{{\sc{Exact Cover by 3-Sets}}\xspace}

\newcommand{\psaved}{\ensuremath{k}\xspace}
\newcommand{\pburned}{\ensuremath{k}\xspace}

\newlength{\atextwidth}
\setlength{\atextwidth}{\linewidth}
\addtolength{\atextwidth}{-0.5cm}

\newcommand{\p}{\ensuremath{\mathsf{P}}}
\newcommand{\np}{\ensuremath{\mathsf{NP}}}

\newcommand{\wone}{\ensuremath{\mathsf{W[1]}}}

\newtheorem{lemma}{Lemma}
\newtheorem{theorem}{Theorem}
\newtheorem{corollary}{Corollary}
\newtheorem{remark}{Remark}

\newcommand{\ie}{\textit{i.e.}~}

\DeclareMathOperator{\pw}{pw}
\DeclareMathOperator{\cw}{cw}

\DeclareMathOperator{\bw}{bw}
\DeclareMathOperator*{\argmax}{arg\,max}
\DeclareMathOperator*{\argmin}{arg\,min}

\newcommand{\problemdec}[3]{
  \vspace{1mm}
\noindent\fbox{
  \begin{minipage}{0.95\linewidth}
  \begin{tabular*}{\linewidth}{@{\extracolsep{\fill}}lr} #1 \\ \end{tabular*}
  {\bf{Input:}} #2  \\
  {\bf{Question:}} #3
  \end{minipage}
  }
  \vspace{1mm}
}






\usetikzlibrary{arrows,decorations.pathmorphing,backgrounds,positioning,fit}
\usetikzlibrary{shapes}
\usetikzlibrary{arrows,calc}

\tikzset{
  treenode/.style = {align=center, inner sep=0pt, text centered,font=\sffamily},
  arn_n/.style = {circle, draw=black, fill=black,inner sep=0pt,minimum size=4pt},
  arn_r/.style = {circle, draw=black, fill=black},
  arn_x/.style = {circle, draw=black, fill=black},
}






\usepackage[pdfdisplaydoctitle,menucolor=orange!40!black,filecolor=magenta!40!black,urlcolor=blue!40!black,linkcolor=red!40!black,citecolor=green!40!black,colorlinks]{hyperref}




\begin{document}





\begin{frontmatter}

\title{The Firefighter Problem: A Structural Analysis} 

\author[janka]{Janka Chleb\'ikov\'a}
\ead{janka.chlebikova@port.ac.uk}

\author[morgan]{Morgan Chopin\fnref{fn1}}
\ead{chopin@lamsade.dauphine.fr}

\address[janka]{University of Portsmouth, School of Computing, 1.24, Buckingham Building, Lion Terrace, Portsmouth PO1 3HE, United Kingdom}
\address[morgan]{Universit\'e Paris-Dauphine, LAMSADE, Place du Mar\'echal de Lattre de Tassigny,\\ 75775 Paris Cedex~16, France}
\fntext[fn1]{A major part of this work done during a three-month visit of the University of Portsmouth supported by the ERASMUS program.}

\begin{abstract}
\sloppy
We consider the complexity of the firefighter problem where~${b \geq 1}$ firefighters are available at each time step.
This problem is proved \np-complete even on trees of degree at most three and budget one~\cite{finbow2007} and on trees of bounded degree~$b+3$ for any fixed budget~$b \geq 2$~\cite{bazgan2012}. 

In this paper, we provide further insight into the complexity landscape of the problem by showing that the pathwidth and the maximum degree of the input graph govern its complexity.
More precisely, we first prove that the problem is \np-complete even on trees of pathwidth at most three for any fixed budget~$b \geq 1$. We then show that the problem turns out to be fixed parameter-tractable with respect to the combined parameter ``pathwidth'' and ``maximum degree'' of the input graph. 
\end{abstract}

\begin{keyword}
firefighter problem \sep trees \sep pathwidth \sep cutwidth \sep bandwidth \sep parameterized complexity
\end{keyword}

\end{frontmatter}








\section{Introduction}

The firefighter problem was introduced by Hartnell~\cite{hartnell1995} and received considerable attention in a series of papers~\cite{anshelevich2012,cai2008,develin2007,finbow2007,hartnell2000,IKM11,king2010,mac03,ng2008,Costa2013}.
In its original version, a fire breaks out at some
vertex of a given graph. At each time step, one
vertex can be protected by a firefighter and then the fire
spreads to all unprotected neighbors of the vertices on fire. 
The process ends when the fire can no longer spread. At the end all vertices that are not on fire are considered as saved. The
objective is at each time step to choose a vertex which
will be protected by a firefighter such that a maximum number of
vertices in the graph is saved at the end of the process. 
In this paper, we consider a more general version which allows us to protect $b \geq 1$ vertices at each step (the value $b$ is called \textit{budget}). 

The original firefighter problem was proved to be \np-hard for bipartite graphs
\cite{mac03}, cubic graphs \cite{king2010} and unit disk graphs \cite{fomin2012}. 
Finbow~et~al.~\cite{finbow2007} showed that the problem is \np-hard even on trees. More precisely, they proved the following dichotomy theorem:
 the problem is
\np-hard even for trees of maximum degree three and it is solvable
in polynomial-time for graphs with maximum degree three, provided
that the fire breaks out at a vertex of degree at most two. Furthermore, the problem is polynomial-time solvable for caterpillars and so-called P-trees \cite{mac03}.
Later, Bazgan~et~al.~\cite{bazgan2012} extended the previous results by showing that the general firefighter problem is \np-hard even for trees of maximum degree~$b+3$ for any fixed budget~$b \geq 2$ and polynomial-time solvable on $k$-caterpillars.
From the approximation point of view, the problem is $\frac{e}{e-1}$-approximable on trees ($\frac{e}{e-1} \approx 1.5819$) \cite{cai2008} and
it is not $n^{1-\varepsilon}$-approximable on general graphs for any~${\varepsilon > 0}$ unless $\p = \np$ \cite{anshelevich2012}. Moreover for trees in which each non-leaf vertex has at most three children, the firefighter problem is $1.3997$-approximable \cite{IKM11}. Very recently, Costa~et~al.~\cite{Costa2013} extended the  $\frac{e}{e-1}$-approximation algorithm on trees to the case where the fire breaks out at $f>1$ vertices and $b>1$ firefighters are available at each step.
From a parameterized perspective, the problem is \wone-hard with respect to the natural parameters ``number of saved vertices'' and ``number of burned vertices''~\cite{ouripec,ourisaac}.
Cai~et~al.~\cite{cai2008} gave first fixed-parameter tractable algorithms and polynomial-size kernels for trees for each of the following parameters: ``number of saved vertices'', ``number of saved leaves'', ``number of burned vertices'', and ``number of protected vertices''.

In this paper we show that the complexity of the problem is governed by the maximum degree and the pathwidth of the input graph. In \autoref{cont:sec:prel}, we first provide the formal definition of the problem as well as some preliminaries. In \autoref{sec:pw-hard}, we complete the hardness picture of the problem on trees by proving that it is also \np-complete on trees of pathwidth three. We note that the given proof is also a simpler proof of the \np-completeness of the problem on trees. In \autoref{sec:algo}, we devise a parameterized algorithm with respect to the combined parameter ``pathwidth'' and ``maximum degree'' of the input graph. The conclusion is given in \autoref{sec:conclusion}.

\section{Preliminaries} \label{cont:sec:prel}








\sloppy
\paragraph{Graph terminology} Let $G=(V,E)$ be an \textit{undirected graph} of order $n$.  For a subset $S\subseteq V$, $G[S]$ is the induced subgraph of $G$. 
The \textit{neighborhood} of a vertex $v \in V$, denoted by $N(v)$, is the set of all neighbors of $v$. 
We denote by $N^{k}(v)$ the set of vertices which are at distance at most $k$ from $v$.
The \textit{degree} of a vertex $v$ is denoted by $\deg_G(v)$ and the \emph{maximum degree} of the graph~$G$ is denoted by~$\Delta(G)$.

A \textit{linear layout} of $G$ is a bijection $\pi : V \to \{1,\ldots,n\}$. For convenience, we express~$\pi$ by the list $L=(v_1,\ldots,v_n)$ where $\pi(v_i) = i$. Given a linear layout $L$,  we denote the distance between two vertices in $L$ by $d_L(v_i, v_j) = |i-j|$.

The \textit{cutwidth} $\cw(G)$ of $G$ is the minimum ${k \in \mathbb{N}}$ such that the vertices of $G$ can be arranged in a linear layout $L=(v_1,\ldots,v_n)$ in such a way that, for every ${i \in \{1,\ldots,n-1\}}$, there are at most $k$ edges between $\{v_1,\ldots,v_i\}$ and $\{v_{i+1},\ldots,v_n\}$.

The \textit{bandwidth} $\bw(G)$ of $G$ is the minimum ${k\in \mathbb{N}}$ such that the vertices of $G$ can be arranged in a linear layout $L=(v_1,\ldots,v_n)$ so that $d_L(v_i, v_j) \leq k$ for every edge $v_iv_j$ of $G$.

A path decomposition $\mathcal{P}$ of $G$ is a pair $(P, \mathcal{H})$  where $P$ is a path with node set $X$ and $\mathcal{H} = \{H_x : {x\in X}\}$ is a family of subsets of $V$ such that the following conditions are met
\begin{enumerate}
\item $\bigcup_{{x \in X}} H_x = V.$
\item For each $uv \in E$ there is an $x \in X$ with $u, v \in H_x$.
\item For each~${v \in V}$, the set of nodes~$\{{x : {x \in X}  \mbox{ and } {v \in H_x} }\}$ induces a subpath of~$P$.
\end{enumerate}
The width of a path decomposition $\mathcal{P}$ is $\max_{x \in X} |H_x | - 1$.
The pathwidth~$\pw(G)$ of a graph~$G$ is the minimum width over all possible path decompositions of~$G$.

We may skip the argument of $\pw(G)$, $\cw(G)$, $\bw(G)$ and $\Delta(G)$ if the graph~$G$ is clear from the context.

A \textit{star} is a tree consisting of one vertex, called the \textit{center} of the star, adjacent to all the other vertices.

\paragraph{Parameterized complexity} 
\sloppy
Here we only give the basic notions on parameterized complexity used in this paper, for more background the reader is referred to \cite{DF99,Nie06}. The parameterized complexity is a framework which provides a new way to express the computational complexity of problems. A decision problem parameterized by a problem-specific parameter~$k$ is called \emph{fixed-parameter tractable} if there exists an algorithm that solves it in time~$f(k) \cdot n^{O(1)}$ where~$n$ is the instance size. The function~$f$ is typically super-polynomial and only depends on~$k$. In other words, the combinatorial explosion is confine into~$f$. We may sometimes say that a problem is fixed-parameter tractable with respect the \emph{combined parameter} $k_1$,$k_2$, \ldots, and $k_p$  meaning that the problem can be solved in time~$f(k_1, k_2, \ldots, k_p) \cdot n^{O(1)}$.

\paragraph{Problems definition} 
We start with an informal explanation of the propagation process for the firefighter problem.
Let~$G=(V,E)$ be a graph of order~$n$ with a vertex~$s \in V$, let $b \in \mathbb{N}$ be a \textit{budget}.
At step $t = 0$, a fire breaks out at vertex~$s$ and~$s$ starts burning.
At any subsequent step~$t > 0$ the following two phases are performed in sequence:
\begin{enumerate}
\item \emph{Protection phase} : The firefighter protects at most $b$ vertices not yet on fire.
\item \emph{Spreading phase} : Every unprotected vertex which is adjacent to a burned vertex starts burning.
\end{enumerate}
Burned and protected vertices remain burned and protected until the propagation process stops, respectively. 
The propagation process stops when in a next step no new vertex can be burned.
We call a vertex saved if it is either protected or if all paths from any burned vertex to it contains at least one protected vertex.
Notice that, until the propagation process stops, there is at least one new burned vertex at each step. This leads to the following obvious lemma.

\begin{lemma} \label{lem:bounded-steps}
The number of steps before the propagation process stops is less or equal to the total number of burned vertices.
\end{lemma}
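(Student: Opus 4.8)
The plan is to turn the observation stated immediately before the lemma into a short counting argument on the sets of newly burned vertices. Denote by $T$ the index of the last step at which the fire still spreads, i.e.\ the protection/spreading phases are carried out at steps $1,\ldots,T$, and at step $T+1$ no vertex can burn anymore. For every $t \in \{0,1,\ldots,T\}$ let $B_t$ be the set of vertices that are on fire immediately after step~$t$ but were not on fire before step~$t$; in particular $B_0 = \{s\}$. Since a vertex, once burned, stays burned, the sets $B_0,B_1,\ldots,B_T$ are pairwise disjoint, and the total number of burned vertices is exactly $\sum_{t=0}^{T} |B_t|$.

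Next I would invoke the remark preceding the statement: as long as the propagation has not stopped, at least one new vertex catches fire at each step, so $|B_t| \ge 1$ for every $t \in \{1,\ldots,T\}$ (and trivially $|B_0| = 1$). Summing these inequalities gives
\[
\#\{\text{burned vertices}\} \;=\; \sum_{t=0}^{T} |B_t| \;\ge\; T+1 ,
\]
which is at least the number of steps performed before the process stops (that number being $T$ or $T+1$ depending on whether step~$0$ is counted). Either way the claimed inequality follows.

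The only point needing a little care is the off-by-one bookkeeping in what counts as a ``step before the process stops'': the terminal step, at which nothing new burns, must not be counted, and this is precisely why the bound comes out with a vertex to spare. I do not anticipate any genuine obstacle here — the lemma is essentially a restatement of the monotonicity of the set of burned vertices together with the fact that this set grows by at least one element at every non-terminal step.
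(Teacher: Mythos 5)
Your argument is correct and is exactly the one the paper intends: the lemma is presented as an immediate consequence of the preceding observation that each non-terminal step burns at least one new vertex, and your partition into the disjoint sets $B_t$ merely writes out that counting explicitly. No difference in approach, and the off-by-one bookkeeping you flag only makes the bound slightly stronger than claimed.
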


A \textit{protection strategy} (or simply \textit{strategy})~$\Phi$ indicates which vertices to protect at each step until the propagation process stops.
Since there can be at most $n$ burned vertices, it follows from \autoref{lem:bounded-steps} that the propagation unfolds in at most $n$ steps. 
We are now in position to give the formal definition of the investigated problem.


\problemdec{The \firefighter problem:}{A graph $G=(V,E)$, a vertex $s\in V$, and positive integers $b$ and~$\psaved$.}{Is there a strategy for an instance $(G,s,b,k)$ with respect to budget~$b$ such that at most \psaved vertices are burned if a fire breaks out at~$s$?}




When dealing with trees, we use the following observation which is a straightforward adaptation of the one by MacGillivray and Wang for the case~$b > 1$~\cite[Section~4.1]{mac03}.

\begin{lemma} \label{lem:trees:base}
Among the strategies that maximizes the number of saved vertices (or equivalently minimizes the number of burned vertices) for a tree, there exists one that protects vertices adjacent to a burned vertex at each time step.
\end{lemma}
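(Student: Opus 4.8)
The plan is to start from an arbitrary optimal strategy $\Phi$ for the tree $T$ rooted at the fire source $s$, and to transform it step by step into one that at every time step protects only vertices adjacent to a currently burning vertex, without ever decreasing the number of saved vertices. First I would observe that in a tree, protecting a vertex $v$ at step $t$ "saves" precisely the subtree $T_v$ hanging below $v$ (the component of $T - \{$parent edge of $v\}$ containing $v$), provided $v$ itself has not yet burned; since paths in a tree are unique, cutting the single edge from $v$ to its parent isolates $T_v$ from the fire. So the value of $\Phi$ is determined by the collection of subtrees $\{T_v : v \text{ protected by } \Phi\}$ that are successfully cut off, i.e. for which $v$ is protected strictly before the fire reaches $v$.

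The key step is the exchange argument. Suppose $\Phi$ protects a vertex $v$ at some step $t$ where $v$ is \emph{not} adjacent to any vertex burning at step $t$. Let $d$ be the distance from $s$ to $v$ in $T$. Because the fire advances by exactly one edge per step along the (unique) $s$--$v$ path whenever it is not blocked, and because $v$ is not yet burning and not adjacent to the fire at step $t$, the fire is still at distance at least $2$ from $v$ at step $t$; hence there is a vertex $w$ on the $s$--$v$ path that is an ancestor of $v$, is adjacent to a burning vertex at step $t$, and is not yet burning (and not already protected, else $v$ would be irrelevant). Replace the protection of $v$ at step $t$ by the protection of $w$ at step $t$. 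This is legal (at most $b$ vertices are still protected per step), and since $T_w \supseteq T_v$, every vertex that was saved via $v$ is still saved via $w$, so the value does not decrease; meanwhile $w$ is adjacent to the fire at step $t$. Each such exchange strictly increases the total number of (step, protected-vertex) pairs in which the protected vertex is adjacent to the fire — or, more carefully, strictly decreases a potential such as $\sum$ over protected vertices of their distance to $s$ — so the process terminates after finitely many exchanges. One also has to handle protections that are simply wasted (the protected vertex never gets reached by the fire, or a strict ancestor of it is already protected): such protections can be deleted or reassigned to any fire-adjacent unburnt vertex, or left out entirely, without loss, since by \autoref{lem:bounded-steps} the process lasts at most $n$ steps and only finitely many useful protections exist.

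The main obstacle I anticipate is bookkeeping rather than a deep idea: one must make the induction or potential-function argument precise so that a single exchange never creates a \emph{new} non-adjacent protection elsewhere (it does not, because we only move one protection at one fixed step, strictly toward the fire along one path, and the burning set at that step is unchanged by a protection-phase swap) and so that termination is guaranteed. Care is also needed with ties when several protected vertices at the same step lie on nested paths, and with the degenerate cases where the budget is not fully used or where $v$ lies on the $s$--(some other protected vertex) path. Once these are dealt with, the resulting strategy protects only fire-adjacent vertices at every step and is at least as good as $\Phi$, which proves the lemma. \qedfill
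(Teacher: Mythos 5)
The paper does not actually prove this lemma---it defers to it as ``a straightforward adaptation'' of MacGillivray and Wang's observation---and your exchange argument (replace any protection of a vertex $v$ not adjacent to the fire by the deepest unburned, unprotected ancestor $w$ of $v$ on the $s$--$v$ path, which is fire-adjacent and satisfies $T_w \supseteq T_v$, and delete genuinely wasted protections) is exactly the standard proof of that observation, so it is both correct and the same approach. The one imprecise aside is the claim that a swap ``never creates a new non-adjacent protection elsewhere'': since the burned set at later steps can only shrink, later protections may newly become wasted or non-fire-adjacent; but this does not matter, because your fallback potential (the sum of the depths of the protected vertices, which strictly decreases with every swap or deletion) already guarantees termination.
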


Throughout the paper, we assume all graphs to be connected since otherwise we can simply consider the component where the initial burned vertex~$s$ belongs to.

\section{Firefighting on path-like graphs} \label{sec:pw-hard}

Finbow~et~al.~\cite{finbow2007} showed that the problem is \np-complete even on trees of degree at most three. However, the constructed tree in the proof has an unbounded pathwidth. 
In this section, we show that the \firefighter problem is \np-complete even on trees of pathwidth three.
For that purpose we use the following problem.

\problemdec{The \onesat problem:}{A CNF formula in which every clause contains exactly and only three positive literals and every variable appears in exactly three clauses.}{Is there a satisfying assigment (a truth assignment such that each clause has exactly one true literal) for the formula?}

The \np-completeness of the above problem is due to its equivalence with the \np-complete \ecover problem~\cite{GJ79}.

\begin{theorem} \label{th:pw}
The \firefighter problem is \np-complete even on trees of pathwidth three and budget one.
\end{theorem}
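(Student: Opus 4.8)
The plan is to reduce from \onesat. Given a formula $\varphi$ with variables $x_1,\ldots,x_n$ and clauses $C_1,\ldots,C_m$ (where $3m = 3n$, so $m=n$, since every variable occurs in exactly three clauses), I would build a tree $T$ rooted at the fire source $s$ together with a budget-one strategy threshold $k$ such that $T$ has pathwidth three and $\varphi$ has a 1-in-3 satisfying assignment if and only if at most $k$ vertices burn. Membership in \np is immediate: a strategy is a list of at most $n$ protected vertices (by \autoref{lem:bounded-steps}), and simulating the spread is polynomial.

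The main gadget design: since the budget is one, at each time step the strategy protects exactly one vertex, and by \autoref{lem:trees:base} we may assume it is adjacent to the fire. The source $s$ will be connected by a long path (a "timer") so that at prescribed steps the fire arrives at a vertex that branches into the variable gadgets; the essential trick is to arrange the branching so that at step $i$ the firefighter is forced to choose between the "true" side and the "false" side of variable $x_i$, and this choice must be made while the fire is still far from the clause-checking part. Each variable $x_i$ gets a sub-path with two pendant structures; protecting the true-vertex saves a prescribed number of leaves associated with the clauses in which $x_i$ occurs positively as the "selected" literal, while leaving the false side to burn (and vice versa). Clause gadgets are attached so that a clause $C_j$ contributes full value only when exactly one of its three variables is set to make $C_j$'s literal true — the standard 1-in-3 accounting. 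One sets $k$ to be the total vertex count minus the maximum achievable saved value, which is attained exactly when every clause is satisfied in the 1-in-3 sense.

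The pathwidth bound is where care is needed but it is also what makes the construction clean: a caterpillar has pathwidth one, and attaching bounded-depth "brooms" to the spine raises it only by a constant. I would lay out an explicit path decomposition whose spine follows the timer path and the variable sub-paths in order, with each bag containing a current spine vertex, its parent on the spine, and at most one or two vertices of the pendant broom currently being processed — giving bags of size four, hence width three. The hard part will be getting the timing exactly right: the fire must reach each variable-choice vertex on the step at which the firefighter is "meant" to act on $x_i$, and must reach the clause gadgets only after all variable choices are locked in, so that no strategy can cheat by protecting a clause-side vertex directly or by desynchronizing the spread. Balancing these path lengths and the leaf-counts so that the unique optimum corresponds precisely to 1-in-3 satisfiability — and not to some mixed strategy — is the crux of the argument; the rest (pathwidth, maximum degree staying bounded, \np membership) is routine. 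Finally I would remark, as the introduction promises, that collapsing the gadgets gives a noticeably shorter proof of Finbow et al.'s original \np-completeness on trees. \qedfill
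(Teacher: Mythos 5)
Your high-level architecture matches the paper's: both reduce from \onesat, both use a spine/timer path from $s$ with a forced binary choice per variable, both stagger path lengths so clause gadgets are reached only after all variable choices are locked in, and your pathwidth argument (spine bags of size at most four, pendant subtrees of pathwidth two interleaved along the decomposition) is essentially the one the paper gives. But there is a genuine gap at exactly the point you flag as ``the crux'': you never construct a mechanism that enforces the \emph{exactly-one} condition of 1-in-3 satisfiability, and the mechanism you gesture at would not work. Setting $k$ to be the total vertex count minus a ``maximum achievable saved value'' attained at 1-in-3 assignments is an additive, variable-by-variable accounting; the predicate ``every clause has exactly one true literal'' is not expressible as optimizing a sum of per-variable contributions, because you must penalize a clause both when it has too many true literals and when it has too few. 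An assignment with one over-satisfied and one under-satisfied clause could tie the count of a genuine 1-in-3 assignment under any purely additive leaf-count scheme.

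The paper resolves this with clause-level, non-additive interaction exploited through the budget-one dynamics. First, the formula is doubled: each clause $c_j$ gets a companion $\bar c_j$ on the negated literals, so a 1-in-3 assignment makes exactly one literal of $c_j$ and exactly two of $\bar c_j$ true. Second, \emph{guard-vertices} (stars with $k$ leaves) are attached so that they must be protected on pain of exceeding $k$; the timing is arranged so that during a three-step window devoted to clause $c_j$, one guard per true positive literal becomes adjacent to the fire simultaneously --- if two or more literals of $c_j$ are true, two guards are threatened at once and one must burn (penalizing over-satisfaction) --- while on the negated side each false literal of $\bar c_j$ forces the firefighter to spend a step saving a distant guard, sacrificing a \emph{dummy} vertex; a clause with zero true literals sacrifices three dummies instead of two, giving exactly one extra burned vertex (penalizing under-satisfaction). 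Without this two-sided, step-by-step forcing --- or some equivalent device --- the ``mixed strategy'' you worry about cannot be excluded, so the correctness direction of your reduction is not established. The remaining ingredients you describe (timer path, equidistant literal-vertices, pathwidth-three decomposition, \np\ membership via \autoref{lem:bounded-steps} and \autoref{lem:trees:base}) are sound and coincide with the paper's.
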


\begin{proof}
Clearly, \firefighter belongs to \np. Now we provide a polynomial-time reduction from \onesat.

In the proof, a guard-vertex is a star with $\pburned$ leaves where the center is adjacent to a vertex of a graph. It is clear that if at most $\pburned$ vertices can be burned then the guard-vertex has to be saved.

\sloppy Let~$\phi$ be a formula of~\onesat with $n$ variables~$\{x_1, \ldots, x_n\}$ and $m$ initial clauses~$\{c_1, \ldots, c_m\}$.
Notice that a simple calculation shows that~$n=m$. First, we extend~$\phi$ into a new formula~$\phi'$ by adding~$m$ new clauses as follows. For each clause~$c_j$ we add the clause~$\bar{c}_j$ by taking negation of each variable of~$c_j$.
A satisfying assignment for~$\phi'$ is then a truth assignment such that each clause~$c_j$ has exactly one true literal and each clause~$\bar{c}_j$ has exactly two true literals.
It is easy to see that~$\phi$ has a satisfying assignment if and only if~$\phi'$ has one.

Now we construct an instance~$I' = (T,s,1,\pburned)$ of \firefighter from~$\phi'$ as follows (see \autoref{fig:pw}).
We start with the construction of the tree $T$, the value of~$\pburned$ will be specified later.

\begin{itemize}
\item Start with a vertex set $\{s=u_1,u_2,\ldots,u_p\}$ and edges of~$\{su_2,u_2u_3,\ldots,u_{p-1}u_{p}\}$ where~$p = 2n-1$ and add two degree-one vertices~$v_{x_i}$ and~$v_{\bar{x}_i}$ adjacent to~$u_{2i-1}$ for every~$i \in \{1,\ldots,n\}$.
\end{itemize}



\noindent
Then for each~$i \in \{1,\ldots,n\}$ in two steps:

\begin{itemize}
\item Add a guard-vertex~$g_i$ (resp.~$\bar{g}_i$) adjacent to~$v_{x_i}$ (resp.~$v_{\bar{x}_i}$).

\item At each vertex~$v_{x_i}$ (resp.~$v_{\bar{x}_i}$) root a path of length~$2 \cdot (n-i)$ at~$v_{x_i}$ (resp.~$v_{\bar{x}_i}$) in which the endpoint is adjacent to three degree-one vertices (called literal-vertices) denoted by~$\ell^{x_i}_1$,~$\ell^{x_i}_2$, and~$\ell^{x_i}_3$ (resp.~$\ell^{\bar{x}_i}_1$,~$\ell^{\bar{x}_i}_2$, and~$\ell^{\bar{x}_i}_3)$. Each literal-vertex corresponds to an occurence of the variable~$x_i$ in an initial clause of~$\phi$. Analogously, the literal-vertices~$\ell^{\bar{x}_i}_1$,~$\ell^{\bar{x}_i}_2$, and~$\ell^{\bar{x}_i}_3$ represent the negative literal~$\bar{x}_i$ that appears in the new clauses of~$\phi'$.
\end{itemize}

\noindent
Notice that each leaf of the constructed tree so far is at distance exactly~$p+1$ from~$s$.

\begin{itemize}
\item For each variable~$x_i$ (resp.~$\bar{x}_i$),~$i \in \{1,\ldots,n\}$,
there are exactly three clauses containing $x_i$ (resp. $\bar{x}_i$). 
Let~$c_j$ (resp.~$\bar{c}_j$),~$j \in \{1,\ldots,m\}$, be the first one of them.
Then root a path~$Q^{x_i}_j$ (resp.~$Q^{\bar{x}_i}_j$) of length~$3 \cdot (j-1)$ at~$\ell^{x_i}_1$ (resp.~$\ell^{\bar{x}_i}_1$), and add a guard-vertex~$g^{x_i}_j$ adjacent to the endpoint of~$Q^{x_i}_j$.
To the endpoint of~$Q^{\bar{x}_i}_j$ (i) add a degree-one vertex~$d^{\bar{x}_i}$ (a dummy-vertex) and (ii) root a path of length~$3$ where the last vertex of the path is a guard vertex~$g^{\bar{x}_i}_j$
Repeat the same for two other clauses with $x_i$ (resp.~$\bar{x}_i$) and~$\ell^{x_i}_2$,~$\ell^{x_i}_3$ (resp.~$\ell^{\bar{x}_i}_2$,~$\ell^{\bar{x}_i}_3$).
\end{itemize}
To finish the construction, set~$\pburned = p + \frac{n}{2} (11n+7)$.



In what follows, we use~\autoref{lem:trees:base} and thus we only consider strategies that protect a vertex adjacent to a burned vertex at each time step. Recall that the budget is set to one in the instance $I'$.
Now we show that there is a satisfying assignment for~$\phi'$ if and only if there exists a strategy for~$I'$ such that at most~$\pburned$ vertices in~$T$ are burned.

``$\Rightarrow$'' : Suppose that there is a satisfying assignment~$\tau$ for~$\phi'$. We define the following strategy~$\Phi_{\tau}$ from~$\tau$.
At each step~$t$ from $1$ to $p+1$, if~$t$ is odd then protect~$v_{\bar{x}_{\lceil t/2 \rceil}}$ if~$x_{\lceil t/2 \rceil}$ is true otherwise protect~$v_{x_{\lceil t/2 \rceil}}$.
If~$t$ is even then protect the guard-vertex~$g_{\lceil t/2 \rceil}$ if~$v_{\bar{x}_{\lceil t/2 \rceil}}$ has been protected, otherwise protect~$\bar{g}_{\lceil t/2 \rceil}$.
At the end of time step~$p+1$, the number of burned vertices is exactly~$p + \sum_{i=1}^n (3+2(n-i)+1) = p + 3n + n^2$. 
Moreover, the literal-vertices that are burned in~$T$ correspond to the true literals in~$\phi'$.
Thus, by construction and since~$\tau$ statisfies~$\phi'$, the vertices adjacent to a burning vertex are exactly one guard-vertex~$g^{x_a}_1$, two dummy vertices~$d^{\bar{x}_b}, d^{\bar{x}_c}$ and~$3n-1$ other vertices where~$x_a \vee x_b \vee x_c$ is the first clause,~$a, b, c \in \{1,\ldots,n\}$. At step~$p+2$, we must protect the guard vertex~$g^{x_a}_1$. During the steps~$p+3$ and~$p+4$, the strategy must protect one vertex lying on the path~$D^{\bar{x}_b}_1$ and~$D^{\bar{x}_c}_1$, respectively. Thus~$3(3n-3) + 5 = 9n-4$ more vertices are burned at the end of step~$p+4$. 
More generally, from time step $p+3(j-1)+2$ to $p+3(j-1)+4$, for some~$j \in\{1,\ldots,m\}$, the strategy~$\Phi_{\tau}$ must protect a guard-vertex~$g^{x_a}_{j}$ and one vertex of each path~$D^{\bar{x}_b}_{j}$ and~$D^{\bar{x}_c}_{j}$, where~$x_a, x_b, x_c$ appear in the clause~$c_j$,~$a, b, c \in \{1,\ldots,n\}$. Thus~$9(n- (j - 1)) -4$ vertices get burned.
It follows that the number of burned vertices from step~$p+2$ to~$p+3m+1$ is~$\sum_{j=1}^{m}[ 9(n- (j - 1)) -4 ] = \frac{9}{2}m(m+1) - 4m$.
Putting all together, we arrive at a total of~$p + 3n + n^2 + \frac{9}{2}m(m+1) - 4m = p + \frac{n}{2} (11n+7) = k$ burned vertices.

``$\Leftarrow$'': 
Conversely, assume that there is no satisfying assignment for~$\phi'$.
Observe first that any strategy~$\Phi$ for~$I'$ protects either~$v_{x_i}$ or~$v_{\bar{x}_i}$ for each~$i \in \{1,\ldots,n\}$. 
As a contradiction, suppose that there exists~$i \in \{1,\ldots, n\}$ such that~$\Phi$ does not protect~$v_{x_i}$ and~$v_{\bar{x}_i}$. Then in some time step both~$v_{x_i}$ and~$v_{\bar{x}_i}$ get burned. Hence, it is not possible to protect both~$g_i$ and~$\bar{g}_i$ and at least one will burn implying that more than~$\pburned$ vertices would burn, a contradiction. Furthermore,~$v_{x_i}$ and~$v_{\bar{x}_i}$ cannot be both protected otherwise we would have protected a vertex not adjacent to a burned vertex at some step.
Now consider the situation at the end of step~$p+1$.
By the previous observation, the literal-vertices that are burned in~$T$ can be interpreted as being the literals in~$\phi'$ set to true.
As previously, the number of burned vertices so far is exactly~$p + \sum_{i=1}^n (3+2(n-i)+1) = p + 3n + n^2$.
Let~$n_g$ and~$n_d$ be the number of guard-vertices and dummy-vertices adjacent to a burned vertex, respectively.
As it follows from the previous construction, we know that~$n_g = 3 - n_d$ with~$0 \leq n_g \leq 3$ and~$0 \leq n_d \leq 3$. We have the following possible cases:
\begin{itemize}
\item[(1)] $n_g > 1$. In this case, a guard-vertex gets burned and hence more than~$\pburned$ vertices would burn.
\item[(2)] $n_g = 1$. Let~$g^{x_a}_1$ be that guard-vertex and let~$d^{\bar{x}_b}, d^{\bar{x}_c}$ be the~$n_d = 3 - n_g = 2$ dummy-vertices where~$x_a, x_b, x_c$ are variables of the first clause. At time step~$i = p+2$, we must protect~$g^{x_a}_1$. Furthermore, during the step $i = p+3$ (resp. $i = p+4$), any strategy must protect a vertex lying on the path~$D^{\bar{x}_b}_1$ (resp. $D^{\bar{x}_c}_1$). Indeed, if a strategy does otherwise then at least one guard-vertex~$g^{\bar{x}_b}_1$ or $g^{\bar{x}_c}_1$ gets burned. Thus~$2$ dummy-vertices are burned.
\item[(3)] $n_g = 0$. Hence we have exactly~$n_d = 3 - n_g = 3$ dummy-vertices~$d^{\bar{x}_a}, d^{\bar{x}_b}, d^{\bar{x}_c}$ adjacent to burned vertices.
Using a similar argument as before, we know that during the step $i = p+2$ (resp. $i = p+3$, $i = p+4$), a strategy must protect a vertex lying on the path~$D^{\bar{x}_a}_1$ (resp. $D^{\bar{x}_b}_1$, $D^{\bar{x}_c}_1$). Thus~$3$ dummy-vertices are burned.
\end{itemize}
Notice that at step~$p+5$, we end up with a similar situation as in step~$p+2$.
Now consider an assignment for~$\phi'$. Since~$\phi'$ is not satisfiable, therefore~$\phi$ is not satisfiable as well. There are two possibilities:
\begin{itemize}
\item There exists a clause~$c_j$ in~$\phi$ with more than one true literal. Thus, we end up with case~(1) and there is no strategy for~$I'$ such that at most~$k$ vertices are burned.
\item There is a clause~$c_j$ in~$\phi$ with only false literals. This corresponds to the case~(3) and the number of burned vertices would be at least~$1+p + \frac{n}{2} (11n+7)$ (at least one extra dummy-vertex gets burned) giving us a total of at least~$\pburned+1$ burned vertices. Hence there is no strategy for~$I'$ where at most~$k$ vertices are burned.
\end{itemize}

It remains to prove that the pathwidth of~$T$ is at most three. 
To see this, observe that any subtree rooted at~$v_{x_i}$ or~$v_{\bar{x}_i}$ has pathwidth two. 
Let~$P_{x_i}$ and~$P_{\bar{x}_i}$ be the paths of the path-decompositions of these subtrees, respectively.
We construct the path-decomposition for~$T$ as follows.
For every~$i \in \{1,\ldots,n-1\}$, define the node~$B_{i} = \{u_{2i-1},u_{2i},u_{2i+1}\}$.
Extend all nodes of the paths~$P_{x_i}$ and~$P_{\bar{x}_i}$ to ~$P'_{x_i}$ and~$P'_{\bar{x}_i}$ by adding the vertex~$u_{2i-1}$ inside it. Finally, connect the paths~$P'_{x_1}$,~$P'_{\bar{x}_1}$ and the node~$B_{1}$ to form a path and continue in this way with~$P'_{x_2}$,~$P'_{\bar{x}_2}$,~$B_{2}$,~$P'_{x_3}$,~$P'_{\bar{x}_3}$,~$B_{3}$,~\ldots,~$B_{n-1}$,~$P'_{x_n}$,~$P'_{\bar{x}_n}$.
This completes the proof.
\end{proof}

\begin{figure*}[!t]
\centering
\includegraphics[scale=0.8]{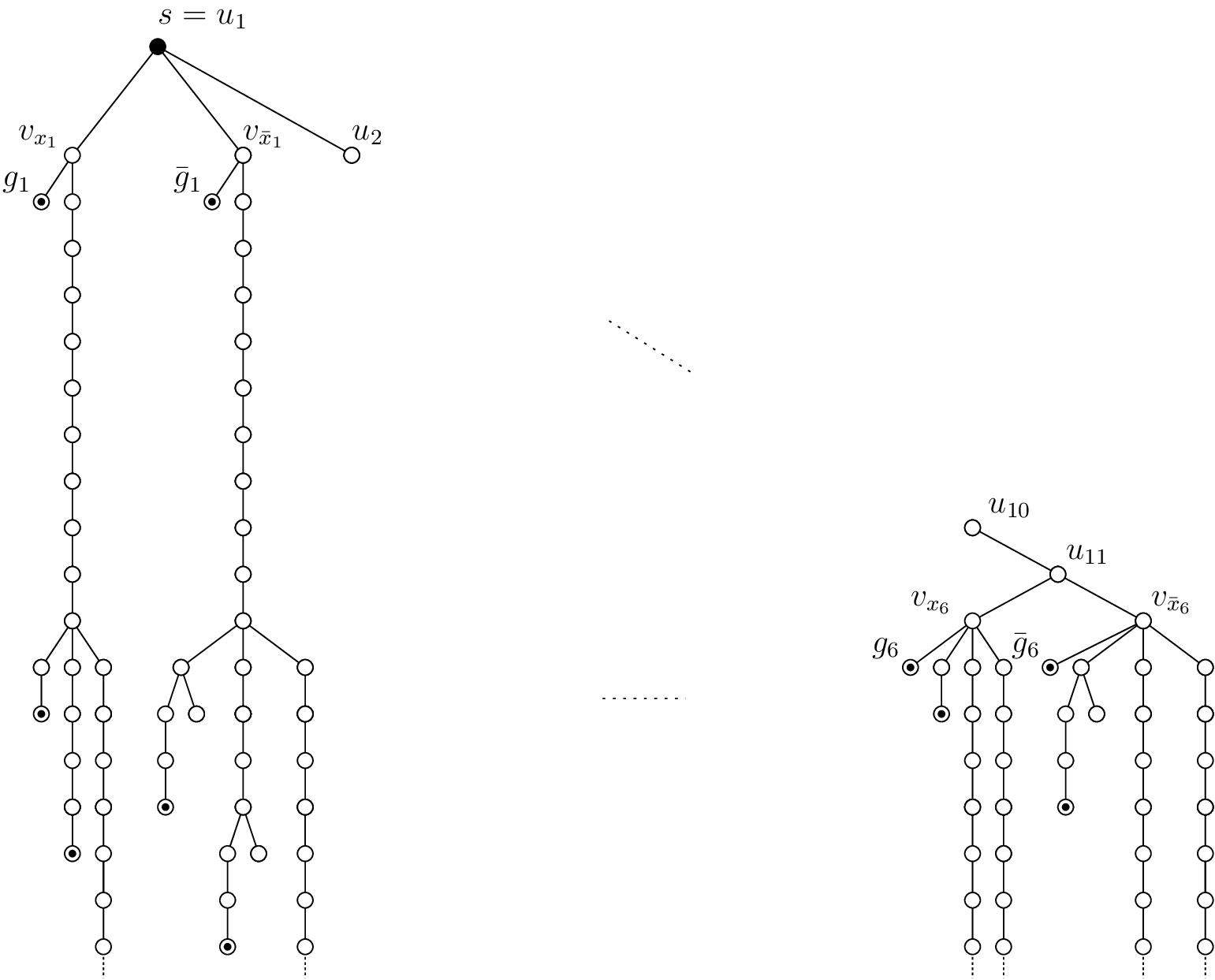}
\caption{An example of part of a tree constructed from the formula $\phi = (x_1 \vee x_3 \vee x_6) \wedge (x_1 \vee x_2 \vee x_3) \wedge (x_3 \vee x_4 \vee x_5)  \wedge (x_2 \vee x_4 \vee x_5) \wedge (x_1 \vee x_4 \vee x_6)\wedge (x_2 \vee x_6 \vee x_5)$. Guard vertices are represented by a dot within a circle.}
\label{fig:pw}
\end{figure*}

We can generalize the previous result to any fixed budget $b \geq 1$ as follows.

\begin{corollary} \label{cor:npc-pw-anybudget}
For any fixed budget $b \geq 1$, the \firefighter problem is \np-complete even on trees of pathwidth three.
\end{corollary}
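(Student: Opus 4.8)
For $b\ge 2$ (the case $b=1$ being \autoref{th:pw}), the plan is to reuse the reduction from \onesat built in \autoref{th:pw} and to force $b-1$ of the $b$ firefighters to be wasted, at \emph{every} time step, on auxiliary structures, so that effectively a single firefighter is left to act on the hard instance.

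Concretely, I would take the tree $T$ of \autoref{th:pw} — with every guard-vertex now given the new threshold value as its number of leaves — and attach to the source $s$ a collection of $b-1$ pairwise disjoint gadgets $D_1,\dots,D_{b-1}$. Each $D_\ell$ is a ``caterpillar of guards'': a path $a^\ell_1 a^\ell_2\cdots a^\ell_L$ with $a^\ell_1$ adjacent to $s$, where $L$ is large enough (at least the number of steps the propagation lasts on $T$, which by \autoref{lem:bounded-steps} is at most $|V(T)|$), together with one guard-vertex hanging off $s$ itself and one hanging off each $a^\ell_j$. The play inside $D_\ell$ is then forced: at step~$1$ one must protect the guard attached to $s$, so $a^\ell_1$ burns; at step $j+1$ one must protect the guard attached to the now-burning vertex $a^\ell_j$, so $a^\ell_{j+1}$ burns; and so on. Hence $D_\ell$ consumes exactly one firefighter per step for the whole process while burning a fixed number $N$ of its own vertices, and I set the new threshold to $k':=k+(b-1)N$, with $k$ as in \autoref{th:pw}. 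The pathwidth is unaffected: each $D_\ell$ is, up to the pendant guard-leaves, a caterpillar and so has pathwidth at most two, and adding $s$ to its bags and splicing all the $D_\ell$-decompositions into the block of bags of the path decomposition of $T$ that contains $s$ yields width at most three. The construction is polynomial for every fixed $b$, and membership in \np{} is clear.

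The ``yes'' direction is then immediate: a satisfying assignment gives the strategy $\Phi_\tau$ of \autoref{th:pw} on $T$ (one firefighter per step) together with the forced play on the $D_\ell$'s (the remaining $b-1$), for a total of exactly $k+(b-1)N=k'$ burned vertices. The delicate part — and the step I expect to be the main obstacle — is the ``no'' direction: one must show that no strategy beats $k'$, that is, rule out that a strategy deviates from the forced play inside some $D_\ell$, the only apparent gain being to ``cut'' $D_\ell$ at some $a^\ell_j$ with one extra firefighter and thereby free that firefighter at all later steps. I would argue that any step at which the single ``main'' firefighter is diverted is a step at which either a guard-vertex of $T$ is lost (on the even steps and throughout the clause-processing phase of \autoref{th:pw}), or both $v_{x_i}$ and $v_{\bar x_i}$ burn for some~$i$; in the latter case every clause of $\phi$ containing $x_i$, together with its companion clause $\bar c_j$, ends up with the wrong number of true literals, so, exactly as in case~(1) of \autoref{th:pw}, an extra guard-vertex burns. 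In all cases the burn count reaches at least $k'+1$, and choosing $L$ (hence $N$) appropriately ensures that whatever a strategy saves by cutting a gadget cannot offset this loss. Pinning down this trade-off against the detailed accounting of \autoref{th:pw} is the real work; the forced play, the threshold update, the pathwidth bound, and \np-membership are all routine.
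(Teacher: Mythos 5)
Your construction is essentially the paper's: the published proof attaches to $s$ a single path $w_1w_2\cdots w_{5n+1}$ with $b-1$ guard-vertices hanging off \emph{each} $w_i$, whereas you attach $b-1$ paths with one guard-vertex per path vertex --- the same firefighter-wasting device, merely transposed --- and the threshold update, the pathwidth argument via splicing caterpillar decompositions at $s$, and the ``yes'' direction all match. The one substantive issue you raise, namely ruling out a strategy that spends an extra firefighter to sever a wasting path and thereby frees firefighters at all later steps, is real, but the paper does not resolve it either: its proof simply asserts that ``at each time step, only one firefighter can be placed freely'' and declares the rest similar to \autoref{th:pw}. So your proposal is at the level of rigor of the published argument while being more candid about where the remaining work lies; note only that your sketched rebuttal of the cutting deviation (``an extra guard-vertex burns'') is not yet airtight, since once one or more gadgets are cut the strategy has \emph{several} spare firefighters per step and can, for instance, protect both $g_i$ and $\bar g_i$ after letting $v_{x_i}$ and $v_{\bar x_i}$ burn, so the accounting genuinely has to be redone rather than inherited verbatim from the budget-one case.
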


\begin{proof}
We start from the reduction of~\autoref{th:pw} and alter the tree $T$ as follows. Let~$w_1$ be the vertex~$s$ (corresponds also to~$u_1$). Add a path $\{w_1w_2,w_2w_3,\ldots,w_{5n}w_{5n+1}\}$ to $T$ together with~$b-1$ guard-vertices added to each~$w_i$. First, one can easily check that the pathwidth remains unchanged since the added component has pathwidth two and is only connected to the root~$s$. Second, it can be seen that at each time step, only one firefighter can be placed ``freely'' as the other $b-1$ firefighters must protect $b-1$ guard-vertices. It follows that we end up to a similar proof as for~\autoref{th:pw}. This completes the proof.
\end{proof}









\section{Path-like graphs of bounded degree}  \label{sec:algo}

As previously shown, for any fixed budget $b \geq 1$, the \firefighter problem is \np-complete on trees of bounded degree~$b+3$~\cite{finbow2007,bazgan2012} and on trees of bounded pathwidth three (\autoref{th:pw}). It is thus natural to ask for the complexity of the problem when both the degree and the pathwidth of the input graph are bounded. In what follows, we answer this question positively. A first step toward this goal is to use the following combinatorial characterization of the number of burned vertices in a graph.






\begin{theorem} \label{lem:boundburn}
Consider a graph of pathwidth $\pw$ and maximum degree $\Delta$. 

\sloppy
\noindent
If the number of initially burned vertices is bounded by~$f_1(\pw, \Delta)$ for some function~$f_1$ then 
there exists a protection strategy such that at most~${f_2(\pw, \Delta) \geq f_1(\pw, \Delta)}$ vertices are burned for some function $f_2$.
\end{theorem}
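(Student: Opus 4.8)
The plan is to show that, starting from a fire of bounded size in a graph of bounded pathwidth $\pw$ and bounded maximum degree $\Delta$, the fire can be confined after a bounded number of steps, and hence only a bounded number of vertices can ever burn. The key quantitative observation is that in any graph of pathwidth $\pw$, every separator we need is small: if $(P,\mathcal H)$ is a path decomposition of width $\pw$, then each bag $H_x$ has at most $\pw+1$ vertices and removing a bag disconnects the ``left'' part from the ``right'' part of the decomposition. So the natural strategy is: let the fire spread for a little while, locate a bag of the path decomposition that lies ``just ahead'' of the current fire boundary on every branch, and spend the budget protecting the (at most $\pw+1$) vertices of that bag. Once an entire bag is protected, no fire can cross it, and everything on the far side is saved.

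The first step I would carry out is to bound the number of burned vertices after $t$ steps purely in terms of $t$, $\Delta$, and the number $f_1(\pw,\Delta)$ of initially burned vertices. Since the fire spreads only to neighbours of burning vertices, after $t$ steps the burned set is contained in $\bigcup_{s'} N^t(s')$ over the initially burned vertices $s'$, and in a graph of maximum degree $\Delta$ we have $|N^t(v)| \le 1 + \Delta + \Delta(\Delta-1) + \cdots \le \Delta^{t+1}$ (for $\Delta\ge 2$; the cases $\Delta\le 1$ are trivial). Hence after $t$ steps at most $f_1(\pw,\Delta)\cdot \Delta^{t+1}$ vertices have burned, regardless of the adversary's luck. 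The second step is to exhibit a stopping time $t^\star = t^\star(\pw,\Delta)$: I claim that after at most $f_1(\pw,\Delta)\cdot\Delta^{t^\star+1}$ further vertices could ever appear on the fire boundary, there is a family of at most $\pw+1$ vertices whose protection seals the fire off. Concretely, consider a path decomposition; the set of currently burning vertices, together with the initial vertices, touches some contiguous stretch of bags, and because each bag is a separator of size $\le\pw+1$, one can pick $O(1)$ many bags (in fact, the boundary is controlled by the at most $\pw+1$ vertices in a single well-chosen bag on each ``side'' of the burning region) so that protecting all their vertices — at most $2(\pw+1)$ vertices — isolates the fire. With budget $b\ge 1$ this takes at most $2(\pw+1)$ steps to install; during those steps the fire advances by at most a bounded amount by the Step-1 bound. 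Combining, the total number of burned vertices is at most
\[
f_2(\pw,\Delta) \;=\; f_1(\pw,\Delta)\cdot \Delta^{\,2(\pw+1)+c}
\]
for an absolute constant $c$, which is a function of $\pw$ and $\Delta$ only and is clearly $\ge f_1(\pw,\Delta)$.

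The main obstacle — and the step that needs the most care — is making precise the claim that a bounded number of bag-vertices suffices to seal the fire, and that we can afford to protect them before the fire reaches them. The subtlety is that the fire may have several boundary components scattered along the path decomposition, and one needs to argue that the union of the burning region's ``left frontier bag'' and ``right frontier bag'' genuinely separates everything that is still burning from everything that is not yet burning, uniformly in time; this uses property (3) of a path decomposition (the interval property) to ensure that once those two bags are fully protected no new bag can become infected. A second delicate point is the timing: during the $O(\pw)$ rounds it takes to protect these bags, the fire keeps moving, so one must either choose the sealing bags with a safety margin of $O(\pw/b)$ bags ahead of the current frontier — possible since the ``width'' we need to look ahead is a function of $\pw$ and $\Delta$ only — or argue inductively that re-choosing the frontier bags each round still converges. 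Either way the number of extra burned vertices incurred during sealing is bounded by the Step-1 estimate with $t = O(\pw)$, so it only changes the exponent in $f_2$, not its status as a function of $\pw$ and $\Delta$. Once these two points are nailed down, the rest is the routine arithmetic of summing the geometric bound, which I would not grind through here.
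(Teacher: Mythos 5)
Your Step~1 is fine and matches the paper's counting: after $t$ steps at most $f_1(\pw,\Delta)\cdot\Delta^{t+1}$ vertices can be burned. The gap is in the sealing claim. Protecting all vertices of one bag on each side of the burning region's span does \emph{not} confine the fire to a bounded set: those two bags separate the fire from every vertex whose bag-interval lies entirely outside the span, but not from vertices whose intervals lie \emph{inside} the span, and unboundedly many such vertices can be reachable from the fire. Concretely, take a spider with centre $s$ and three legs $A=a_1\cdots a_n$, $B=b_1\cdots b_n$, $C=c_1\cdots c_n$ (so $\pw=2$, $\Delta=3$), with the width-$2$ path decomposition $\{a_n,a_{n-1}\},\ldots,\{a_2,a_1\},\{a_1,s\},\{s,b_1,b_2\},\{s,b_2,b_3\},\ldots,\{s,b_{n-1},b_n\},\{s,c_1\},\{c_1,c_2\},\ldots,\{c_{n-1},c_n\}$. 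The interval of $s$ already covers every bag of leg $B$, so at time $0$ the span of the fire contains all of leg $B$; your left and right frontier bags contain only $A$- and $C$-vertices, and the strategy you describe lets all $n$ vertices of $B$ burn. Re-choosing the frontier bags each round does not help, since the span always contains the interval of $s$. (For this particular graph a better decomposition exists, but your argument must work for an arbitrary width-$\pw$ decomposition, or you must prove that a suitably ``non-wasteful'' one always exists; neither is done, and in general a single burned vertex may occupy an arbitrarily long interval of bags.) So the one-shot bound $f_1\cdot\Delta^{2(\pw+1)+c}$ cannot be correct.

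The missing idea is a recursion on the width, and this is exactly where the paper's proof differs from yours. The paper passes to a linear layout realizing the cutwidth (finite here since $\cw\le\pw\cdot\Delta$), seals a ``bubble'' around each source using at most $2\cw$ firefighters --- one per edge crossing the bubble's boundary in the layout, not one per bag-vertex --- and then, crucially, observes that after deleting the edges joining two already-burned vertices the subgraph induced on the bubble has cutwidth strictly smaller, so the entire argument is applied again \emph{inside} the sealed region; the bound $f_2$ emerges from this induction on the cutwidth, not from a single sealing step. If you want to stay with path decompositions you would need an analogous induction (seal, show that the effective width of the interior drops, recurse); without some such recursive descent the argument fails on the example above.
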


\begin{proof}
First we  prove the following claim: Consider a graph of cutwidth $\cw$. If the number of initially burned vertices is bounded by~$g_1(\cw)$ for some function~$g_1$ then there exists a protection strategy such that at most~$g_2(\cw) \geq g_1(\cw)$ vertices are burned for some function~$g_2$. We will prove this by induction on~$\cw$.

The claim is obviously true when the cutwidth is $0$ since the graph cannot contain any edge.
Suppose now that the claim is true for any graph of cutwidth at most~$k$, $k > 0$. 
We show that it also holds for a graph of cutwidth~$k+1$.
Let~$H=(V,E)$ be such a graph and~$F \subseteq V$ be the set of initially burned vertices with $|F| \leq g_1(\cw(H))$ for some function~$g_1$.
Consider a linear layout~$L =  (v_1, \ldots, v_n)$ of~$H$ such that for every~$i = 1,\ldots,n-1$, there are at most~$k+1$ edges between~$\{v_1,\ldots,v_i\}$ and~$\{v_{i+1},\ldots,v_n\}$. For every~$s \in F$ and~ $i \geq 0$, we define inductively the following sets, where $R_0(s) = L_0(s) = \{s\}$
\begin{equation} 
R_i(s) = \left\{
    \begin{array}{ll}
        \{s=v_k,v_{k+1},\ldots,v_{k'}\} & \mbox{if } \exists v_{k'} \in N^{i}(s) \mbox{ : } v_{k'} = \displaystyle\argmax_{v \in N^{i}(s)} d_{L}(s,v) \\
         R_{i-1}(s) & \mbox{otherwise}
    \end{array}
\right.
\end{equation}
\begin{equation} 
L_i(s) = \left\{
    \begin{array}{ll}
        \{s=v_k,v_{k-1},\ldots,v_{k'}\} & \mbox{if } \exists v_{k'} \in N^{i}(s) \mbox{ : } v_{k'} =\displaystyle\argmin_{v \in N^{i}(s)} d_{L}(s,v) \\
         L_{i-1}(s) & \mbox{otherwise}
    \end{array}
\right.
\end{equation}
We are now in position to define the set~$B_i(s)$, called a \textit{bubble}, by~$B_i(s) = L_i(s) \cup R_i(s)$ for all~$i \geq 0$. Informally speaking, the bubble~$B_i(s)$ corresponds to the \textit{effect zone} of~$s$ after~$i$ steps of propagation \ie every burned vertex inside the bubble is due to the vertex~$s$. The idea of the proof is to show that every bubble can be ``isolated'' from the rest of the graph in a bounded number of steps by surrounding it with firefighters (see \autoref{fig:bubbles}). We then show that the inductive hypothesis can be applied on each bubble which will prove the theorem.

Let $s_1, s_2 \in F$.
We say that two bubbles~$B_i(s_1)$ and~$B_j(s_2)$ for some $i,j \geq 0$ \textit{overlap} if there exists an edge~$uv \in E$ with~$u \in B_i(s_1)$ and~$v \in B_j(s_2)$. In this case, we can \textit{merge} two bubbles into one \ie we create a new bubble which is the union of~$B_i(s_1)$ and~$B_j(s_2)$.

Let us consider an initially burned vertex~$s \in F$ and its bubble~$B_{2 \cdot \cw(H)}(s)$.
First, merge~$B_{2 \cdot \cw(H)}(s)$ with every other bubble~$B_{2 \cdot \cw(H)}(s')$ with~$s' \in F$ that possibly overlap into a new one~$B'_{2 \cdot \cw(H)}(s)$. By definition, we know that the number of edges with an endpoint in~$B'_{2 \cdot \cw(H)}(s)$ and the other one in~$V \setminus B'_{2 \cdot \cw(H)}(s)$ is less or equal to~$2 \cdot \cw(H)$. 
Thus, we define the strategy that consists in protecting one vertex~$v \in V \setminus B'_{2 \cdot \cw(H)}(s)$ adjacent to a vertex in~$B'_{2 \cdot \cw(H)}(s)$ at each step~$t=1, \ldots, 2 \cdot \cw(H)$.
Let~$F'$ be the set of vertices burned at step~$2 \cdot \cw(H)$.
Since~$\Delta(H) \leq 2 \cdot \cw(H)$, we deduce that~$|F'|$ is less or equal to~$|F|\cdot\Delta(H)^{2 \cdot \cw(H)} \leq g_1(\cw(H))\cdot(2 \cdot \cw(H))^{2 \cdot \cw(H)}$ hence bounded by a function of~$\cw(H)$. Let us consider the subgraph~$H' = H[B'_{2 \cdot \cw(H)}(s)]$.
Observe that we can safely remove every edge~$uv$ from~$H'$ for which~$u,v \in F'$. Indeed, such edge cannot have any influence during the subsequent steps of propagation.
By the definition of a bubble, this implies that the cutwidth of~$H'$ is decreased by one and thus is now at most~$k$. 
Therefore, we can apply our inductive hypothesis to~$H'$ which tells us that there is a strategy for~$H'$ such that at most~$g'_2(\cw(H'))$ vertices are burned for some function~$g'_2$. By \autoref{lem:bounded-steps}, this strategy uses at most~$g'_2(\cw(H'))$ steps to be applied. It follows that the number of burned vertices in~$H$ after applying this strategy is at most the number of burned vertices from step~$1$ to the step~$2 \cdot \cw(H) + g'_2(\cw(H'))$ which is~$|F|\cdot\Delta(H)^{2 \cdot \cw(H) + g'_2(\cw(H'))} \leq g_1(\cw(H))\cdot(2 \cdot \cw(H))^{2 \cdot \cw(H) + g'_2(\cw(H'))}$ which is bounded by a function of~$\cw(H)$. From now on, one can see that the previous argument can be applied iteratively to each bubble. Since the number of bubbles is bounded by~$g_1(\cw(H))$ (there is at most one bubble for each vertex initially on fire), we deduce that the total number of burned vertices is bounded by~$g_2(\cw(H))$ some function~$g_2$. This concludes the proof of the claim.

We are now in position to prove the theorem. Let~$G$ be a graph. Suppose that the number of initially burned vertices in~$G$ is at most~$f_1(\pw(G), \Delta(G))$ for some function~$f_1$. We know that~$\pw(G) \leq \cw(G)$ and~$\Delta(G) \leq 2 \cdot \cw(G)$~\cite{korach93}. Thus the number of burned vertices is at most~$f_1'(\cw(G))$ for some function $f_1'$. From the above claim we deduce that there exists a strategy such that at most~$f_2'(\cw(G))$ vertices get burned. Since~$\cw(G) \leq \pw(G) \cdot \Delta(G)$~\cite{chung89}, it follows that the number of burned vertices is bounded by~$f_2(\pw(G), \Delta(G))$ for some function $f_2$.  
This completes the proof.
\end{proof}

\begin{figure*}[t]
\centering
\includegraphics{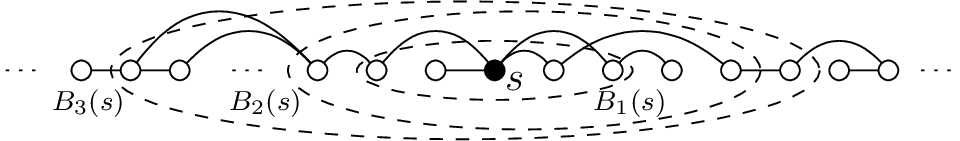}
\caption{A linear layout of a graph of cutwidth two. Dashed ellipses represent the bubbles associated to an initially burned vertex $s$.}
\label{fig:bubbles}
\end{figure*}

\begin{remark}
Notice that~\autoref{lem:boundburn} is still valid even if the number of firefighters available at each step is not the same (for example if there are $b_1$ firefighters at time step one, $b_2$ firefighters during the second time step, etc.).
\end{remark}

In~\cite{ourisaac} the authors proved that the \firefighter problem is fixed-parameter tractable with respect to the combined parameter~$k$ and the budget~$b$. Let $(G,s,b,\psaved)$ be an instance of \firefighter where~$G$ has maximum degree~$\Delta$.
We can derive the following algorithm: If $b \geq \Delta$ then protect all the vertices in~$N(s)$ at time step one; otherwise, apply the algorithm from~\cite{ourisaac}.
We then easily obtain the following

\begin{theorem} \label{th:burnkanddeg}
The \firefighter problem is fixed-parameter tractable with respect to the combined parameter~$k$ and~``maximum degree'' of the input graph.
\end{theorem}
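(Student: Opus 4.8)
The plan is to combine the combinatorial bound of \autoref{lem:boundburn} with the known fixed-parameter tractable algorithm with respect to the combined parameter ``number of burned vertices'' and ``budget'' from~\cite{ourisaac}. The key observation is that once both the budget~$b$ and the maximum degree~$\Delta$ are under control, the budget is effectively bounded in terms of~$\Delta$: if $b \geq \Delta$, then protecting all of~$N(s)$ at the first time step stops the fire immediately with only $1$ burned vertex, so the instance is trivially a yes-instance whenever $\psaved \geq 1$. Hence we may assume $b < \Delta$ and invoke the algorithm of~\cite{ourisaac} parameterized by $\psaved$ and $b$, which then runs in time $g(\psaved, b) \cdot n^{O(1)} \leq g(\psaved, \Delta) \cdot n^{O(1)}$ — fixed-parameter tractable in the combined parameter $\psaved$ and $\Delta$.

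First I would dispose of the easy case $b \geq \Delta$ as above: a single time step of protection saves the whole graph except~$s$, so the answer is ``yes'' iff $\psaved \geq 1$, decidable in polynomial time. Next, in the remaining case $b < \Delta$, I would feed the instance to the algorithm of~\cite{ourisaac}; since $b \leq \Delta - 1$, its running time $g(\psaved, b)\cdot n^{O(1)}$ is bounded by $g(\psaved, \Delta)\cdot n^{O(1)}$, which is exactly what fixed-parameter tractability with respect to the combined parameter $\psaved$ and $\Delta$ demands. Membership in the relevant complexity class then follows by combining the two branches into one algorithm that first tests $b \geq \Delta$ and dispatches accordingly.

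I should note that \autoref{lem:boundburn} is not strictly needed for this particular statement — the dichotomy on $b$ versus $\Delta$ suffices — but it is the natural companion result: it guarantees that whenever the initial fire is small relative to $\pw$ and $\Delta$, the optimal number of burned vertices is also bounded, which is what makes the parameter ``number of burned vertices'' behave well on path-like bounded-degree graphs. The main (and only real) obstacle here is making sure the case split is clean: one must check that in the branch $b \geq \Delta$ the firefighter can indeed legally protect all of $N(s)$ at step one (it can, since $|N(s)| = \deg(s) \leq \Delta \leq b$ and none of these vertices is yet on fire), and that in the branch $b < \Delta$ the cited algorithm's dependence is genuinely only on $\psaved$ and $b$ and not hiding further dependence on $n$ in the exponent. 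Both checks are routine given the results already stated, so the argument is short.
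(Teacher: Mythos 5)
Your proposal is correct and follows essentially the same route as the paper: the same dichotomy on $b$ versus $\Delta$ (protect all of $N(s)$ at step one when $b \geq \Delta$, otherwise invoke the algorithm of~\cite{ourisaac} parameterized by $k$ and $b$ with $b < \Delta$). Your remark that \autoref{lem:boundburn} is not needed here is also accurate — the paper reserves it for the subsequent pathwidth-plus-degree result.
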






We are now in position to give the main result of this section.

\begin{theorem} \label{th:fpt-pw-degree}
The \firefighter problem is fixed-parameter tractable with respect to the combined parameter ``pathwidth'' and ``maximum degree'' of the input graph.
\end{theorem}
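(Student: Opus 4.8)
The plan is to chain together the two results just established. The key observation is that, via \autoref{lem:boundburn}, the optimum of an instance of \firefighter is bounded by a function of the pathwidth and the maximum degree alone: every instance has a single initially burned vertex~$s$, so applying \autoref{lem:boundburn} with $f_1$ the constant function~$1$ yields a protection strategy that burns at most~$f_2(\pw,\Delta)$ vertices, where $\pw$ and $\Delta$ denote the pathwidth and the maximum degree of~$G$. Hence the minimum number of vertices that any strategy can leave burned on~$(G,s,b)$ is at most~$f_2(\pw,\Delta)$. Since \autoref{th:burnkanddeg} provides an algorithm that is fixed-parameter tractable in~$k$ and~$\Delta$, it then suffices to run it with the ``right'' value of~$k$, which this bound caps by a function of~$\pw$ and~$\Delta$.

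Concretely, the algorithm would proceed as follows. First compute~$\Delta$. Then, for $k'=1,2,3,\dots$ in turn, run the algorithm of \autoref{th:burnkanddeg} on the instance~$(G,s,b,k')$, and let~$k^\star$ be the smallest~$k'$ for which it answers \textsc{yes}; since the ``yes''-answers form an up-set (a strategy burning at most~$k'$ vertices also burns at most~$k'+1$) and at most~$n$ vertices can ever burn, this loop terminates and~$k^\star$ equals the minimum number of burned vertices. Finally, answer \textsc{yes} for~$(G,s,b,k)$ iff~$k^\star\le k$. For the running time, \autoref{lem:boundburn} guarantees that the loop actually stops after at most~$f_2(\pw,\Delta)$ iterations, while by \autoref{th:burnkanddeg} iteration~$k'$ costs~$g(k',\Delta)\cdot n^{O(1)}$ for some function~$g$ that we may assume non-decreasing in its first argument. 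The total running time is therefore at most $f_2(\pw,\Delta)\cdot g\bigl(f_2(\pw,\Delta),\Delta\bigr)\cdot n^{O(1)}$, \ie $f(\pw,\Delta)\cdot n^{O(1)}$ for a suitable function~$f$, so \firefighter is fixed-parameter tractable for the combined parameter ``pathwidth'' and ``maximum degree''.

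Note that the pathwidth itself is never used by the algorithm: it enters only the running-time analysis, through \autoref{lem:boundburn}, so no (potentially expensive) pathwidth computation is required. The whole substance of the argument is thus \autoref{lem:boundburn}, which has already been proved; there is no real obstacle left, the only mild point being to take~$f_2$ and~$g$ monotone so that the bound above is valid.
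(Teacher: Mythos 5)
Your proposal is correct and follows essentially the same route as the paper: apply \autoref{lem:boundburn} with a single initially burned vertex to bound the optimum by a function of $\pw$ and $\Delta$, then repeatedly invoke the algorithm of \autoref{th:burnkanddeg} over increasing values of $k'$, with the number of iterations and the per-iteration cost both controlled by that bound. The only cosmetic difference is that you search for the minimal $k^\star$ and compare it to $k$, whereas the paper loops $k'=1,\ldots,k$ and returns ``yes'' on the first success; both yield the same $f(\pw,\Delta)\cdot n^{O(1)}$ running time.
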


\begin{proof}
Let $(G,s,b,\psaved)$ be an instance of \firefighter where~$G$ has maximum degree~$\Delta$ and pathwidth~$\pw$. 
We design the following algorithm. For each value $k'=1,\ldots,\psaved$ run the $f(k',\Delta) \cdot n^{O(1)}$-time algorithm of~\autoref{th:burnkanddeg}: If the algorithm returns ``yes'' then return ``yes''. If the algorithm has returned the answer ``no'' for all $k'=1,\ldots,\psaved$ then return ``no''. 

Using~\autoref{lem:boundburn}, we know that there exists a function $f'$ such that if~$k' \geq f'(\pw, \Delta)$ the algorithm will necessarily returns ``yes'' and stops. It follows that the algorithm is called at most $f'(\pw, \Delta)$ times. The overall running time is then bounded by
\begin{eqnarray*}
O(f'(\pw, \Delta) \cdot f(k',\Delta) \cdot n^{O(1)}) & = & O(f'(\pw, \Delta) \cdot f(f'(\pw, \Delta),\Delta) \cdot n^{O(1)})\\
  & = & f''(\pw, \Delta) \cdot n^{O(1)}
\end{eqnarray*}
for some function $f''$. This completes the proof.
\end{proof}


From the proof of~\autoref{lem:boundburn} and the fact that $\cw(G) \leq \frac{\bw(G) (\bw(G) + 1)}{2}$~\cite{bodlaender88} for any graph~$G$, we easily deduce the following corollary.

\begin{corollary}
The \firefighter problem is fixed-parameter tractable with respect to the parameters  ``cutwidth'' and ``bandwidth''.
\end{corollary}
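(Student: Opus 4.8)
The plan is to mirror the proof of \autoref{th:fpt-pw-degree} almost verbatim, observing that the cutwidth alone already controls every quantity that entered that argument. Let $(G,s,b,\psaved)$ be an instance of \firefighter and write $\cw = \cw(G)$. From the proof of \autoref{lem:boundburn} we reuse two facts valid for every graph $H$: first, $\Delta(H) \leq 2\cw(H)$; second, the intermediate claim established there, namely that whenever the number of initially burned vertices is bounded by a function of $\cw(H)$, there is a protection strategy burning at most $g_2(\cw(H))$ vertices for some function $g_2$. Applying that claim with the single initially burned vertex $s$ (so the initial fire set has size $1$, trivially bounded in terms of $\cw$), we obtain a strategy for $(G,s,b,\cdot)$ that burns at most $g_2(\cw)$ vertices.

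First I would treat the cutwidth parameter. If $\psaved \geq g_2(\cw)$ the instance is a trivial yes-instance. Otherwise $\psaved < g_2(\cw)$, so $\psaved$ is bounded by a function of $\cw$; since in addition $\Delta(G) \leq 2\cw$ is bounded by a function of $\cw$, we may invoke the $f(\psaved,\Delta)\cdot n^{O(1)}$-time algorithm of \autoref{th:burnkanddeg} and substitute the bounds on $\psaved$ and $\Delta$ in terms of $\cw$, obtaining an $f'(\cw)\cdot n^{O(1)}$-time algorithm. Hence \firefighter is fixed-parameter tractable with respect to cutwidth. For the bandwidth parameter I would then use $\cw(G) \leq \frac{\bw(G)(\bw(G)+1)}{2}$~\cite{bodlaender88}: a bound on $\bw(G)$ yields a bound on $\cw(G)$, and feeding this into the preceding paragraph produces an $f''(\bw(G))\cdot n^{O(1)}$-time algorithm.

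I do not expect a genuine obstacle here; the argument is a straightforward composition of already-established bounds. The only points that need a little care are that the ``claim'' inside the proof of \autoref{lem:boundburn} is purely existential about strategies, so it must be paired with the \emph{algorithm} of \autoref{th:burnkanddeg} to actually decide the instance, and that the successive substitutions of parameter bounds compose into genuinely computable functions — which they do, all of them being explicit ($2\cw$, $g_2(\cw)$, and $\tfrac12\bw(\bw+1)$).
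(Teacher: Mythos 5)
Your proposal is correct and follows essentially the same route the paper intends: the corollary is exactly the cutwidth-only claim established inside the proof of \autoref{lem:boundburn}, combined with $\Delta(G)\leq 2\cw(G)$ and $\cw(G)\leq\frac{\bw(G)(\bw(G)+1)}{2}$, and then fed into the algorithm of \autoref{th:burnkanddeg} as in \autoref{th:fpt-pw-degree}. The one point to repair is your explicit threshold test ``is $k\geq g_2(\cw)$?'': as an algorithmic step it presupposes that you can evaluate $\cw(G)$ (respectively $\bw(G)$), and while cutwidth is computable in FPT time with respect to itself, computing bandwidth is not known to be fixed-parameter tractable in the bandwidth, so the literal algorithm for the bandwidth case does not obviously run in $f(\bw)\cdot n^{O(1)}$ time. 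This is harmlessly avoided by copying the loop of \autoref{th:fpt-pw-degree} verbatim---run the algorithm of \autoref{th:burnkanddeg} for $k'=1,2,\ldots,k$ and stop at the first ``yes''---so that the bound $g_2(\cw)\leq g_2\bigl(\tfrac{1}{2}\bw(\bw+1)\bigr)$ enters only the running-time \emph{analysis} (it caps the number of iterations) and the algorithm never needs to know $\cw$ or $\bw$.
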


\section{Conclusion} \label{sec:conclusion}

In this paper we showed that the \firefighter problem is \np-complete even on trees of pathwidth three but fixed-parameter tractable with respect to the combined parameter ``pathwidth'' and ``maximum degree'' of the input graph. The combination of these two results with the \np-completeness of the problem on trees of bounded degree~\cite{finbow2007} indicates that the complexity of the problem depends heavily on the degree and the pathwidth of the graph. We left as an open question whether the problem is polynomial-time solvable on graphs of pathwidth two. 

\bibliographystyle{abbrv}
\bibliography{main}

\end{document}